\documentclass[10pt,twocolumn,twoside]{IEEEtran}
\makeatletter
\def\ps@headings{%
\def\@oddhead{\mbox{}\scriptsize\rightmark \hfil \thepage}%
\def\@evenhead{\scriptsize\thepage \hfil \leftmark\mbox{}}%
\def\@oddfoot{}%
\def\@evenfoot{}}
\makeatother
\pagestyle{headings}


%
%

\usepackage{paralist}

\usepackage[ruled,vlined]{algorithm2e}

\usepackage{amssymb}
\usepackage{amsmath} 

\usepackage{color}
\usepackage{graphicx}
\usepackage{epstopdf}
\usepackage{stmaryrd}
\usepackage{dsfont}
\usepackage{subfig}
\usepackage{paralist}

\usepackage{soul}

\newif\iflong 
\longfalse

\newif\ifcomm
\commtrue


\newtheorem{lem}{Lemma}


\newcommand{\set}[1]{\left\{#1\right\}}



\newenvironment{proof-sketch}{{\noindent\em Proof Sketch.}\hspace*{0.3em}}{\qed\medskip}
\newenvironment{proof}{{\bf Proof.}}{\hfill\rule{2mm}{2mm}\\}
\newenvironment{proof-of}[1]{{\noindent\em Proof of #1.}\hspace*{0.3em}}{\qed\medskip}

\newcounter{assumption}
\newcommand{\theassumptionletter}{A}
\renewcommand{\theassumption}{\theassumptionletter\arabic{assumption}}


\newcommand{\MF}{\mathcal{F}}

\newcommand{\MP}{\mathcal{P}}



\newcommand{\beq}{\begin{equation}}
\newcommand{\eeq}{\end{equation}}
\newcommand{\beqa}{\begin{eqnarray}}
\newcommand{\eeqa}{\end{eqnarray}}
\newcommand{\beqan}{\begin{eqnarray*}}
\newcommand{\eeqan}{\end{eqnarray*}}
\newcommand{\ben}{\begin{eqnarray*}}
\newcommand{\een}{\end{eqnarray*}}

\ifcomm
   \newcommand\comm[1]{\textcolor{blue}{ #1}}
\else
   \newcommand\comm[1]{}
   \renewcommand{\todo}[1]{}
\fi



\newcommand{\bm}{\boldmath}

\newcommand{\qed}{{\hfill$\Box$}}

\newcommand{\xx}{\mbox{$\mathbf{x}$}}

\newcommand{\nn}{\mbox{$\mathbf{n}$}}
\newcommand{\zz}{\mbox{$\mathbf{z}$}}

\newcommand{\vv}{\mbox{\bm $v$}}

\newcommand{\yy}{\mbox{$\mathbf y$}}
\newcommand{\BH}{\mbox{\bm $H$}}
\newcommand{\C}{\mbox{\bm $C$}}
\newcommand{\G}{\mbox{\bm $G$}}

\newcommand{\A}{\mbox{\bm $A$}}
\newcommand{\B}{\mbox{\bm $B$}}

\newcommand{\X}{\mbox{\bm $X$}}
\newcommand{\Y}{\mbox{\bm $Y$}}

\newcommand{\bica}{{bICA}}

\makeatletter
\renewcommand\paragraph{\@startsection{paragraph}{4}{\z@}%
    {1.5ex plus .2ex minus .3ex}%
            {-0em}%
                        {\normalsize\bf}}

\newcommand{\captionfonts}{\small}

\long\def\@makecaption#1#2{%
  \vskip\abovecaptionskip
  \sbox\@tempboxa{{\captionfonts #1: #2}}%
  \ifdim \wd\@tempboxa >\hsize
    {\captionfonts #1: #2\par}
  \else
    \hbox to\hsize{\hfil\box\@tempboxa\hfil}%
  \fi
  \vskip\belowcaptionskip}
\makeatother

\title{\huge{Binary Inference for Primary User Separation in Cognitive Radio Networks}
\thanks{An earlier version of this work appeared in the Proceedings of the $5^{th}$ International Conference on Cognitive Radio Oriented Wireless Networks and Communications (Crowncom 2010).}
}
\author{
\begin{tabular}{c}
\\
Huy Nguyen$^{\dagger}$, Guanbo Zheng$^{\ddagger}$, Zhu Han$^{\ddagger}$, and Rong Zheng$^{\dagger}$ \\
$^{\dagger}$Department of Computer Science \\
$^{\ddagger}$Department of Electrical and Computer Engineering \\
University of Houston, Houston, TX 77204\\
E-mail: {\it \{hanguyen5, gzheng3, zhan2, rzheng\}@uh.edu}
\end{tabular}
}
\begin{document}
\maketitle
\begin{abstract}
Spectrum sensing receives much attention recently in the cognitive radio (CR) network
research, i.e., secondary users (SUs) constantly monitor channel condition to
detect the presence of the primary users (PUs). In this paper, we go beyond
spectrum sensing and introduce the \textit{PU separation problem}, which
concerns with the issues of distinguishing and characterizing PUs in the
context of collaborative spectrum sensing and monitor selection.  The
observations of monitors are modeled as boolean OR mixtures of underlying
binary sources for PUs.  We first justify the use of the binary OR mixture
model as opposed to the traditional linear mixture model through simulation
studies. Then we devise a novel binary inference algorithm for PU separation.
Not only PU-SU relationship are revealed, but PUs' transmission statistics and
activities at each time slot can also be inferred.  Simulation results show
that without any prior knowledge regarding PUs' activities, the algorithm
achieves high inference accuracy even in the presence of noisy measurements.
\end{abstract}

\section{Introduction}
%

With tremendous growth in wireless services, the demand for radio spectrum
has significantly increased. However, spectrum resources are scarce and
most of them have been already licensed to existing operators. Recent studies
have shown that despite claims of spectral scarcity, the actual licensed
spectrum remains unoccupied for long periods of time~\cite{FCC}. Thus,
cognitive radio (CR) systems have been proposed~\cite{CR00,CR01,CR02} in order to
efficiently exploit these spectral holes{, in which licensed primary users (PUs) are not present}. CRs or secondary users (SUs) are
wireless devices that can intelligently monitor and adapt to their environment,
hence, they are able to share the spectrum with the licensed PUs, operating when the PUs are idle.

One key challenge in CR systems is spectrum sensing, i.e., SUs attempt to
learn the environment and determine the presence and characteristics of PUs.
Spectrum sensing can be done at SUs individually or
cooperatively \cite{Ghasemi_Sousa,Sun_Zhang}, with or without the assistance of infrastructure supports
such as dedicated monitor nodes and cognitive pilot channel (CPC) \cite{CPC1,CPC2,CPC3,CPC4}.
Energy detection is one of the most commonly used method for spectrum sensing, where
the detector computes the energy of the received signals and
compares it to a certain threshold value to decide whether the PU signal
is present or not. It has the advantage of short detection time
but suffers from low accuracy compared to feature-based approaches such as cyclostationary
detection~\cite{CR01,CR02}.  From the prospective of a CR system, it is often
insufficient to detect PU activities in a single SU's vicinity (``is there any
PU near me?").  Rather, it is important to determine the identity of PUs (``who
is there?") as well as the distribution of PUs in the field (``where are
they?"). We call these issues the {\it PU separation problem}.

To motive the need for PU separation, let us consider the following scenarios:
\begin{itemize}
\item Multiple SUs cooperatively infer the activities of PUs, some of which may
be observable to only a subset of SUs. In this case, the SUs need to identify
the PU-SU adjacency relationships. Blindly assuming all PUs are observable to
all SUs will lead to inferior detection results.
\item Dedicated monitors are employed for spectrum sensing. There exists
redundancy in monitors' observations due to common PUs across multiple
monitors. Such redundancy can be reduced by judiciously selecting a subset of
monitors to report their spectrum sensing results. Furthermore, some monitors
can be put to low-power modes for energy conservation.
\end{itemize}

Clearly, PU separation is a more challenging problem compared to node-level
PU detection. The conventional wisdom suggests that sophisticated methods such as
feature-based detection are necessary. On the contrary, we find that through
cooperation among monitors or SUs, not only accuracy of energy detection can
be improved as been demonstrated in several existing work \cite{Wang10, Ghasemi_Sousa, Sun_Zhang}, but also PUs
can be identified using solely binary information (due to thresholding in
energy detection). The key to this surprising result is a binary
inference framework that models the observations of SUs and monitors as boolean
OR mixtures of underlying binary latency sources for PUs. It allows us to
exploit the correlation structure among distributed binary observations. We
develop an iterative algorithm, called Binary Independent Component Analysis
(\bica), to determine the underlying latent sources (i.e., PUs) and their active
probabilities. In \bica, no prior information regarding the PUs' activities or
observation noise is assumed. Given $m$ monitors or SUs, up to $2^m-1$ PUs can
be inferred resulting in great efficiency.  
Evaluation results show effectiveness of \bica~under practical settings.

\paragraph*{Contributions} In this paper, we make the following contributions
toward the design of a binary inference framework for PU separation in cognitive
radio networks:

\begin{itemize}
\item We introduce the PU separation problem with cooperative SU inference model
and discuss its importance on CR systems.
\item We provide a stochastic analysis on the difference between linear and binary
PU energy detection models. Results from the study imply that using just binary observations
from SUs has comparable accuracy with using a linear model, while incurring
much less overhead.
\item We apply \bica~to solve the PU separation problem
without any assumption on the noise model or prior
knowledge on the PU activities. We furthermore
consider the inverse problem of inferring the detailed PUs' activities given the
SUs' observations and the inferred model.
\end{itemize}

The rest of the paper is organized as follows. In Section~\ref{sect:model}, the
observation model is introduced. A comparison between the linear and binary energy
model and brief overview of related work are also presented.
In Section~\ref{sect:algo}, we present the
\bica~algorithm to determine the statistics of PU activities and the inference
algorithm to decide which set of PUs are active.
Formulation and solution to the inverse problem under noisy
measurements are presented in Section~\ref{sec:inverse}.
Evaluation results are
detailed in Section~\ref{sect:eval} followed by an overview of related work is
provided in Section~\ref{sec:related} followed by conclusions in
Section~\ref{sect:conclusion}.

\section{Model and Preliminary}
\label{sect:model}
Consider a slotted system in which the transmission activities of $n$ PUs are
modeled as a set of independent binary variables $\yy$ with active
probabilities $\MP(\yy)$. The binary observations due to energy detection at
the $m$ monitor nodes (for the remaining of the paper, we do not distinguish
monitor nodes and SUs) are modeled as an $m$-dimension binary vector $\xx = [x_1,x_2,\ldots,x_m]^T$ with joint
distribution $\MP(\xx)$. It is assumed that presence of any active PU
surrounding of a monitor leads to positive detection.  An unknown binary
mixing matrix $\G_{m\times n}$ is used to represent the relationship between the
observable variables in $\xx$ and the latent binary variables in $\yy =
[y_1,y_2,\ldots,y_n]^T$ as follows:
\beq
x_i = \bigvee_{j=1}^{n}{(g_{ij}\wedge y_j)}, \mbox{ $i = 1, \ldots, m$},
\label{eq:boolean}
\eeq
where $\wedge$ is Boolean {\em AND}, $\vee$ is Boolean {\em OR}, and $g_{ij}$
is the entry on the $i$th row and the $j$th column of $\G$.  For ease of
presentation, we introduce a short-hand notation as
\beq
\xx = \G \otimes \yy.
\eeq
In essence, $g_{ij}$ encodes whether monitor $i$ can detect the transmission
of PU $j$. For a monitor $i$, the energy detection returns 1 when {the monitor can detect one or more active PUs}.
$\G$ can be {seen} as the adjacency matrix of an undirected
bi-partite graph $G=(U,V,E)$, where $U = \set{x_1, x_2, \ldots, x_m}$ and $V =
\set{y_1, y_2, \ldots, y_n}$. An edge $e=(x_i,y_j)$ exists  if $g_{ij} = 1$.
Illustration of a sample network scenario and its bipartite graph is presented in Figure~\ref{fig:model}.

\begin{figure}[tp]
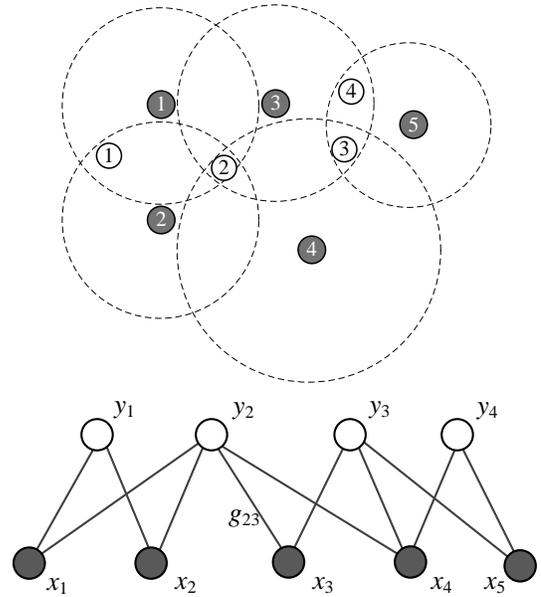

\begin{center}
\includegraphics[width=2.4in]{sniffer_illus_2.pdf}\\
\includegraphics[width=2.8in]{bipartite.pdf}
\end{center}
\caption{A sample network scenario with number of SUs $m = 5$, number of PUs $n = 4$ and its bipartite
graph transformation. White circles represent PUs, black circles represent SUs and dashed lines illustrate SUs' monitor range.}
\label{fig:model}
\end{figure}

Consider {an} $m\times T$ matrix $\X$, which is the collection of $T$
realizations of vector $\xx$. The goal of \bica~is
to determine the distribution of the latent independent random variables $\yy$ and the
binary mixing matrix $\G$ from $\X$, such that $\X$ can be decomposed into the
mixing of realizations of $\yy$. From $\G$ and $\yy$, we can identify the PUs
and additionally infer
PUs' activities at different time slots.
Note that in (\ref{eq:boolean}), measurement noise is
not explicitly modeled, rather, is treated as independent sources.

%
%

\subsection{Why Binary Inference?}
\label{sec:linearvsbinary}
In this section, we motivate the use of a binary inference framework by
considering an alternative linear mixing model. In the linear model, at each
time slot, the received signal power at each monitor can be modeled as a linear
combination of the transmitted signal power from active PUs. More specifically, Let
$\vv = [v_1,v_2,\ldots,v_m]^T$, $\zz = [z_1,z_2,\ldots,z_n]^T$, and $\nn = [n_1, n_2, \ldots, n_m]^T$  be the random
vectors corresponding to the received, transmitted signal power and the Gaussian noise
{respectively}, and $\BH$ is the {$m \times n$} unknown channel gain matrix.
Both large-scale path loss with propagation loss factor $\alpha$ and small-scale
fading following the Rayleigh distribution are considered in this model.
The received signal power is a linear mixture of the transmitted signal power and the noise:
\beq
\vv = \BH \zz + \nn.
\label{eq:Linear}
\eeq

$n_i$ is a random variable with mean $\mu_i$.  If $\vv$ can
be observed directly, classical linear independent component analysis
(ICA)~\cite{Hyvarinen00} can be applied to determine $\BH$
and $\zz$. However, this method suffers from three problems. First, $z_i$'s
have to be non-Gaussian to be recoverable. Second, the channel gain matrix $\BH$
tends to vary over time.  Lastly, communicating the realizations of $\zz$ to a
centralized server for inference incurs higher overhead compared to its binary
counter part $\xx$. Furthermore, when only the binary quantized values of
$\vv$ are observable (i.e., $\xx$), ICA is not longer suitable due to
the non-linearity of the quantization function.

Next, we compare accuracy of the binary OR mixture model against a
quantized version of the above linear model, namely, $\xx' = U(\vv- \tau)$,
where $U(\cdot)$ is a step function defined as $U(r) = 1$ if $r > 0$, {$U(r) = 0$ otherwise}, and
$\tau$ is {a} pre-set threshold. We are interested to see the degree of
``information loss" due to the OR approximation.  In the simulation, 10
monitors and $n$ PUs are deployed in a 500x500 square meter area with $n$
varying from 5 to 20. Locations of PUs are chosen arbitrarily with the
restriction that no two PUs can be observed by the same set of monitors. The
PUs' transmit power levels are fixed at 20mW, the noise floor is -95dbm, and
the propagation loss factor is 3. The SNR detection threshold for {the} monitors is set
to be 5dB (above the noise floor). The value is chosen so that  the false alarm
probability (PU's are reported while none exists) is less than 10\%.
Elements of the binary mixing matrix $\G$ are either 1 or 0, depending on
the received signal for one respective PU only. {In other words, $g_{ij} = 1$
if the $i$th SU can detect transmissions from the $j$th PU}.  The PUs'
activities are modeled as a two-stage Markov chain with transition
probabilities uniformly distributed over (0,1).  We run the simulation for $T =
5,000$ slots and obtain the observations $\X$ and $\X_l$ for the binary OR and
linear mixing model, respectively. Figure~\ref{fig:error_comparison} shows the
{false alarm and miss detection probability}  in the binary OR model using the
results from the linear mixing model as the ground truth. From
Figure~\ref{fig:error_comparison}, we see the two models have very close
performance. We also experiment the case {in which} the initial phases of PU's
transmitted signal vary in $[0, 2\pi]$, and have similar observations.

\begin{figure}[tp]
\begin{center}
\includegraphics[width=3.0in]{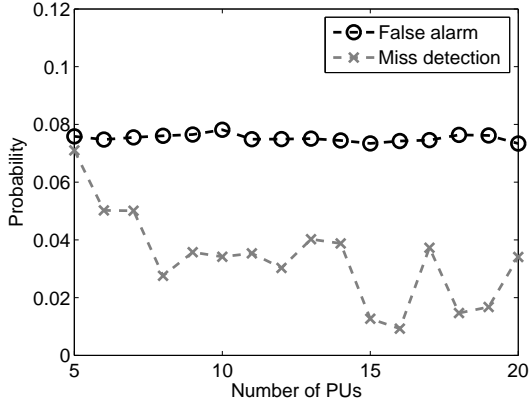}
\caption{False alarm and miss detection probability. Threshold $\tau$ is set at 5dB.}
\label{fig:error_comparison}
\end{center}
\end{figure}

%
%

To this end,  we conclude that the {\it binary model} is a good approximation
of the quantized linear model. As will be demonstrated later, efficient algorithms
can be devised for the PU separation problem under the binary model.

\section{Binary Inference Framework}
\label{sect:algo}
In this section, we first discuss the identifiability of binary independent
sources for the PUs from OR mixtures observed at the monitors, and then present an
inference algorithm that determines the unknown mixing matrix and underlying
sources.
\subsection{Identifiability}
For an $m$-dimension binary random vector $\xx$, the number of different
 realizations {is} $2^m$.  From the data matrix $\X$,
distribution of $\xx$ can be estimated in a non-biased manner as the number of
observations goes to infinity.  We can initialize $n = 2^m-1$ and $\G$ matrix
of dimension $m\times (2^m-1)$ with rows being all possible binary combinations
of length $n$ (with exception of the all-0 entry). This results in a complete
bipartite graph, {in which} an edge exists between any two vertices in $U$
{(the set of monitors)} and $V$ {(the set of signal sources)},
respectively. For a random variable $y_j\in V$, its neighbors in $U$ is given
by the non-zero entries in $g_{j}$ (i.e., the $j$th column of {matrix $\G$}).
Thus, at most $2^m - 1$ independent components can be identified.

Define {$p_l \stackrel{\Delta}{=} \MP(y_l = 1)$}. Let the set
$$\Y(\xx) = \left\{\yy \mid \bigvee_{j=1}^{n}{(g_{ij}\wedge y_j)} = x_i, \forall i\right\}.$$
Therefore,
\beq
\begin{array}{lll}
\MP(\xx) & = &\MP(\yy \in \Y(\xx)) = \sum_{\yy \in \Y(\xx)} \MP(\yy) \\
	 & = & \sum_{\yy \in \Y(\xx)} \prod_{l=1}^{2^m-1}{p_l^{y_l}(1-p_l)^{1-y_l},}
\end{array}
\label{eq:relation}
\eeq
where $\MP(\yy)$ is the joint probability of $\yy$. The last equality {holds} due to
the independence of $y_i$'s.

Given the distribution of random vectors $\xx\in \{0,1\}^m$,  $2^m - 1$
independent equations can be obtained from \eqref{eq:relation} due to the OR
mixture model in \eqref{eq:boolean}. {Since there are $2^m -1$ unknowns (i.e.,
$p_l, l = 1,\ldots, 2^m-1$), their values can be explicitly determined}.  Clearly,
ambiguity exists if two or more independent sources have the same set of
neighbors in $U$ (or equivalently, identical columns in $\G$). In this case,
binary information is insufficient to distinguish these sources.

The set of equations in \eqref{eq:relation} are polynomials of sum product forms,
which are difficult to solve.  This necessitates the design of specialized
algorithms. In the rest of the paper, abusing the notation a bit, we denote $\G$ the $m\times (2^m-1)$
adjacency matrix for the bipartite graph $G'=(U,V',E')$, {where $U=\set{x_1,x_2,\ldots,x_m}$
and $V'=\set{y_1,y_2,\ldots,y_{2^m-1}}$ (i.e. the set of all possible uniquely identifiable latent sources).}
Furthermore, we arrange $\G$ in the order
such that $g_{kl} = 1$ if $(l\ll k) = 1$, for $k = 0, \ldots, m-1$, where $\ll$
is the bit shift to the left. If the resulting $p_l = 0$ for some $l$, this
implies the corresponding column $g_{l}$ can be removed from $\G$.
For an example, consider the network scenario with $m = 3$, the initialized $\G$ matrix will
be:
$$
\G =
\begin{bmatrix}
1&0&1&0&1&0&1\\
0&1&1&0&0&1&1\\
0&0&0&1&1&1&1
\end{bmatrix},
$$
with rows corresponding to 3 monitors and columns corresponding to 7 identifiable
PUs.


\subsection{Inference algorithm}
Before proceeding to the details of the {proposed} algorithm, we first present a few
{related} technical lemmas {as follows}.
\begin{lem}
\label{lem:infer}
Consider a set  $\xx = [x_1, x_2, \ldots, x_{h-1}, x_h]^T$ generated by the data
model in \eqref{eq:boolean}, i.e., $\exists$ binary independent sources $\yy$,
s.t., $\xx = \G \otimes \yy$. The conditional random vector $\xx_{x_h = 0} = [x_1,
x_2, \ldots, x_{h-1}| x_h = 0]^T$ corresponds to the vector of the first $h-1$ elements of $\xx$ when $x_h =
0$. Then, $\xx_{x_h = 0} = \G '\otimes \yy'$, where $\G' = \G_{:,1, \ldots, 2^{h-1}}$
(i.e. the first $2^{h-1}$ columns of $\G$)
and $\MP(y'_l = 1) =  \MP(y_l = 1)$ for $l = 1, \ldots, 2^{h-1}$.
\end{lem}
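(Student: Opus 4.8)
The plan is to read the statement straight off the special ``binary-counting'' layout of $\G$, using the assumed mutual independence of the latent sources to handle the conditioning. The structural fact I would establish first is that, in the adopted ordering, column $j$ of $\G$ is the length-$h$ binary expansion of $j$, with $x_1$ holding the least significant bit and $x_h$ the most significant; consequently the row associated with $x_h$ has a $1$ in column $j$ exactly when $j\ge 2^{h-1}$. Equivalently, $x_h$ is adjacent to the sources $y_{2^{h-1}},\ldots,y_{2^h-1}$ and to no others, while the first $2^{h-1}-1$ columns of $\G$ --- those indexing sources $x_h$ does \emph{not} see --- when restricted to the rows $x_1,\ldots,x_{h-1}$ form once more the canonical $(h-1)\times(2^{h-1}-1)$ binary-counting matrix; this is the candidate $\G'$. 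In this language the lemma merely says that conditioning on $x_h=0$ deletes precisely the sources visible to $x_h$ and leaves everything else, marginals included, intact.

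To make this rigorous I would first rewrite the conditioning event. By \eqref{eq:boolean}, $x_h=\bigvee_{j\,:\,g_{hj}=1}y_j=\bigvee_{j=2^{h-1}}^{2^h-1}y_j$, so $\{x_h=0\}$ is the event $\{y_j=0\text{ for all }j\ge 2^{h-1}\}$ and is therefore measurable with respect to the block $(y_{2^{h-1}},\ldots,y_{2^h-1})$ alone. Because all the $y_j$ are independent, conditioning on an event that depends only on this block does not alter the joint law of the complementary block $(y_1,\ldots,y_{2^{h-1}-1})$: conditionally on $x_h=0$ these variables are still independent Bernoullis with the same parameters $p_1,\ldots,p_{2^{h-1}-1}$. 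Writing $\yy'$ for this conditional vector gives $\MP(y'_l=1)=\MP(y_l=1)$ for $l=1,\ldots,2^{h-1}-1$, which is the second assertion of the lemma.

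It then remains to substitute back into the mixture. For each $i=1,\ldots,h-1$, equation \eqref{eq:boolean} reads $x_i=\bigvee_{j=1}^{2^h-1}(g_{ij}\wedge y_j)$; on the event $\{x_h=0\}$ every term with $j\ge 2^{h-1}$ is $0$ because $y_j=0$ there, so $x_i=\bigvee_{j=1}^{2^{h-1}-1}(g_{ij}\wedge y_j)$, i.e.\ $x_i$ is the OR mixture of $\yy'$ through the $i$-th row of $\G'$. Collecting these identities for $i=1,\ldots,h-1$ yields $\xx_{x_h=0}=\G'\otimes\yy'$. The whole argument is essentially bookkeeping; the only places that demand care are the structural claim in the first paragraph --- pinning down which columns $x_h$ is adjacent to under the chosen ordering, and checking that the surviving submatrix is again canonical --- and the independence step in the second paragraph, which is exactly what lets the conditioning event detach cleanly from the surviving sources. (One should also note the degenerate case $\MP(x_h=0)=0$, i.e.\ $p_j=1$ for some $j\ge 2^{h-1}$, where the conditional vector is undefined and the claim is vacuous.)
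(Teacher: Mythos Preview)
Your argument is correct and is essentially the same as the paper's: both hinge on (i) the structural observation that under the canonical column ordering $\{x_h=0\}=\{y_j=0\text{ for all }j\text{ with }g_{hj}=1\}=\{y_{2^{h-1}}=\cdots=y_{2^h-1}=0\}$, and (ii) independence of the $y_j$'s to decouple the surviving block. The paper carries this out by writing the joint law via \eqref{eq:relation}, splitting the product over $l$ according to whether $g_{hl}=1$, and cancelling the factor $\prod_{g_{hl}=1}(1-p_l)=\MP(x_h=0)$; your version phrases the same cancellation as ``conditioning on an event measurable with respect to one independent block leaves the other block's law unchanged,'' which is cleaner but not a different idea.

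One minor point worth flagging: you index the surviving sources by $l=1,\ldots,2^{h-1}-1$, whereas the lemma as stated in the paper uses $l=1,\ldots,2^{h-1}$. Your count is the correct one for the nontrivial surviving columns; the paper's extra column $l=2^{h-1}$ restricts to the all-zero column on rows $1,\ldots,h-1$ and so is harmless in the mixture, but the accompanying claim $\MP(y'_{2^{h-1}}=1)=\MP(y_{2^{h-1}}=1)$ would fail in general since $y_{2^{h-1}}$ is visible to $x_h$. This is an off-by-one in the paper's statement, not a flaw in your proof.
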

\begin{proof}
We first derive the conditional probability distribution of the first $h-1$ observation variables
given $x_h = 0$,
\begin{equation}
\begin{array}{ll}
& \MP(x_1, x_2, \ldots, x_{h-1} \mid x_{h} = 0)  \\
= & \MP(x_1, x_2, \ldots, x_{h-1} \mid x_{h} = 0)\MP(x_{h}=0) \\
= & \hspace{-3mm}\displaystyle \sum_{\yy \in \Y(\xx)}{\prod_{l=1}^{2^h-1}{p_{l}^{y_{l}}(1-p_{l})^{1-y_{l}}}} \\
= & \hspace{-5mm}\displaystyle \sum_{\begin{array}{l} \yy_{1..2^{h-1}}\in \Y(\xx_{1..h-1})\\ y_{l} = 0, \forall g_{hl}=1 \end{array}}{\prod_{g_{hl} = 0}{p_{l}^{y_{l}}(1-p_{l})^{1-y_{l}}}\prod_{g_{hl}=1}{(1-p_{l})}}
\end{array}
\end{equation}

since $\MP(x_{h}=0) = \displaystyle \prod_{g_{hl}=1}{(1-p_l)}$, we have
\begin{equation}
\begin{array}{ll}
& \MP(x_1, x_2, \ldots, x_{h-1}\mid x_{h} = 0) \\
= & \displaystyle \sum_{\yy'\in \Y(\xx_{1:h-1})}{\prod_{l=1}^{2^h-1}{(p'_{l})^{y'_{l}}(1-p'_{l})^{1-y'_{l}}}} \\
= & \displaystyle \sum_{\begin{array}{l} \yy_{1, \ldots, 2^{h-1}}\in \Y(\xx_{1, \ldots, h-1})\\ y_{l} = 0, \forall g_{hl}=1 \end{array}}{\prod_{g_{hl} = 0}{p_{l}^{y_{l}}(1-p_{l})^{1-y_{l}}}.}
\end{array}
\end{equation}

Clearly, by setting $\MP(y'_l = 1) =  \MP(y_l = 1)$ for $l = 1, \ldots,
2^{h-1}$, the above equality holds.  In the other word, the conditional random
vector $\xx_{x_h = 0} = \G' \otimes \yy'$ for $\G' = \G_{:,1, \ldots, 2^{h-1}}$.
\end{proof}

The above lemma establishes that the
conditional random vector $\xx_{x_h = 0}$ can be represented as an OR mixing of
$2^{h-1}$ independent components. Furthermore, the set of the independent
components is the same as the first $2^{h-1}$ independent components of $\xx$
(under proper ordering).

Consider a sub-matrix of data matrix $\X$, $\X_{(h-1)\times T}^0$,
where the rows  correspond to observations of $x_1, x_2, \ldots, x_{h-1}$
for $t = 1, 2, \ldots, T$ such that $x_{ht} = 0$. Define $\X_{(h-1)\times T}$, which
consists of the first $h-1$ rows of $\X$.  Suppose that we have computed the \bica~
for data matrices $\X_{(h-1)\times T}^0$ and $\X_{(h-1)\times T}$. From
Lemma~\ref{lem:infer}, we know that $\X_{(h-1)\times T}^0$ is realization of OR mixtures of
independent components, denoted by $\yy_{2^{h-1}}^0$. Furthermore,
$\MP[\yy_{2^{h-1}}^0(l) = 1] =  \MP(y_l = 1)$ for $l = 1, \ldots, 2^{h-1}$.
Clearly, $\X_{(h-1)\times T}$ is realization of OR mixtures of $2^{h-1}$ independent
components, denoted by $\yy_{2^{h-1}}$.
Additionally, it is easy to see that the following holds:
$$
\begin{array}{ll}
& \MP[\yy_{2^(h-1)}(l)=1]\\
& = 1 - [1-\MP(\yy_{2^{h-1}}^0(l) = 1)][1-\MP(y_{l+2^{h-1}} = 1)] \\
& = 1 - (1-p_l)(1-p_{l+2^{h-1}}),
\end{array}
$$
where $l = 1, \ldots, 2^{h-1}$. Therefore,
\begin{equation}
\begin{array}{llll}
p_{l} & = & \MP(\yy^0_{2^{h-1}}(l)=1), & l = 1, \ldots, 2^{h-1}, \\
p_{l+2^{h-1}} & = & 1-\frac{1 - \MP(\yy_{2^{h-1}}(l)=1)}{1-\MP(\yy_{2^{h-1}}^0(l) = 1)}, & l = 2, \ldots, 2^{h-1}, \\
p_{2^{h-1} + 1} & = & \frac{\MF{(x_h=1 \wedge x_i=0, \forall i \in [1 \ldots h-1])}}
{\prod_{l=1 \ldots 2^h, l \neq 2^{h-1}-1}{(1-p_l)}}.
\end{array}
\label{eq:half2}
\end{equation}
The last equation above holds because realizations of $\xx$ where ($x_k=1$
while $x_i=0; \forall i \in \set{0,\ldots,k-1}$) are generated from OR mixtures of $\yy_{2^{k-1}}$'s only.
Define $\MF(A)$ as the frequency of event $A$.
To this end, we have the following iterative algorithm as illustrated in
Algorithm~\ref{algo:inc}.

When the number of monitors $m=1$, there are only two possible unique sources,
one that can be detected by the monitor, denoted by [1]; and one that cannot,
denoted by [0]. Their active probabilities can easily be calculated by counting
the frequency of $(x_1 = 1)$ and $(x_1 = 0)$ (lines~\ref{line1} --
\ref{line3}). If $m \ge 2$, we apply Lemma~\ref{lem:infer} and \eqref{eq:half2}
to estimate $p$ and $\G$ through a recursive process. We invoke {\sc FindBICA} on
two sub-matrices $\X^0_{(m-1)\times T}$ and $\X_{(m-1)\times T}$ computed from
$\X$ to determine $p_{1 \ldots 2^{m-1}}$ and $p'_{1 \ldots 2^{m-1}}$, then
infer $p_{2^{m-1}+1 \ldots 2^{m}}$ as in (\ref{eq:half2}) (lines~\ref{line6} --
\ref{line8}). Finally, $p_h$ and its corresponding column $g_h$ in $\G$ are
pruned in the final result if $p_h < \varepsilon$ (lines~\ref{line9} --
\ref{line11}).

\begin{algorithm}
\caption{Incremental binary ICA inference algorithm}
\small
\label{algo:inc}
\SetKwData{Left}{left}
\SetKwInOut{Input}{input}
\SetKwInOut{Output}{output}
\SetKwInOut{Init}{init}
\SetKwFor{For}{for}{do}{endfor}
\SetKwFunction{FindBICA}{FindBICA}
\FindBICA($\X$)\\
\Input{Data matrix $\X_{m\times T}$}
\Init{$n = 2^{m} -1$\; $p_h = 0, h = 1, \ldots, n$\;
$\G$ = $m \times (2^{m} -1)$ matrix with rows corresponding all possible binary vectors of length $m$\;
$\varepsilon$ = the minimum threshold for $p_h$ to be considered a real component;} \BlankLine

\nl\eIf{$m = 1$} {\label{line1}
\nl    $p_1 = \MF(x_1 = 0)$\;\label{line2}
\nl    $p_2 = \MF(x_1 = 1)$\;\label{line3}
    } {
\nl    $p_{1 \ldots 2^{m-1}}$ = \FindBICA($\X^0_{(m-1)\times T}$)\;\label{line4}
\nl    $p'_{1 \ldots 2^{m-1}}$ = \FindBICA($\X_{(m-1)\times T}$)\;\label{line5}
\nl    \For{$l = 2, \ldots, 2^{m-1}$}{\label{line6}
\nl        $p_{l+2^{m-1}} = 1-\frac{1-p'_l}{1-p_l}$\;\label{line7}
       }
\nl    $p_{2^{m-1}+1} = \frac{\MF{(x_m=1 \wedge x_i=0, \forall i \in [1 \ldots m-1])}}{\prod_{l=1 \ldots 2^m-1, l \neq 2^{m-1}+1}{(1-p_l)}}$\;\label{line8}
}
\nl\For{$h = 1, \ldots, 2^{m}$}{\label{line9}
\nl    \If{$(p_h < \varepsilon) \vee (p_h = 0)$}{\label{line10}
\nl            prune $p_h$ and corresponding columns $g_{h}$\;\label{line11}
        }
}
\nl\Output{$p$ and $\G$}
\end{algorithm}
\normalsize

\paragraph*{Computation complexity}
Let $S(m)$ be the computation time for finding \bica~given $\X_{m\times T}$. From
Algorithm~\ref{algo:inc}, we have,
\beq
S(m) = 2S(m-1) + c2^m,
\eeq
where $c$ is a
constant. It is easy to verify $S(m) = cm 2^m$.  Therefore,
Algorithm~\ref{algo:inc} has an exponential computation complexity with respect
to $m$. This is clearly undesirable for large $m$'s. However, we notice that in
practice, correlations among $x_i$'s exhibit locality, and {matrix $\G$} tends
to be sparse. Instead of using a complete bipartite graph to represent $\G$, the
degree of vertices in $V$ (or the number of non-zero elements in $g_{k}$)
tends to be much less than $m$. More specifically, for every pair {of monitors} $i$ and $k$,
we compute {the covariance between their observations:}

\beq
cov(i,k) = \frac{\sum_{t}^T{x_{it}x_{kt}}}{T} -
\frac{\sum_{t}^T{x_{it}}}{T}\frac{\sum_{t}^T{x_{kt}}}{T}.
\label{eq:cov}
\eeq

If $cov(i,k) < \epsilon$, where $\epsilon$ is a small value (e.g., the upper confidence bound of
$cov(i,k)$ estimate), we can remove the corresponding columns in $\G$ and
elements in $\yy$.

\section{The Inverse Problem}
\label{sec:inverse}
Now we have the mixing matrix $\G_{m\times n}$ and the active probabilities
$\MP(\yy)$, given observation $\X_{m\times T}$, the inverse problem concerns
inferring the realizations of the latent variables $\Y_{n\times T}$. Extracting multiple
PUs' activities from the OR mixture observations is a challenging but important
problem in cognitive radio networks. Interesting information, such
as the PU channel usage pattern can be inferred once $\Y$ is available. The
SUs will then be able to adopt better spectrum sensing and access strategies to
exploit the spectrum holes more effectively.

{Recall that $n$ is the number of PUs (latent variables).}
Denote $y_i$ a binary variable for the
$i$th latent variable. Let $\xx = \G \otimes \yy$.  We assume that the probability of
observing $\X$ given $\xx$ depends on their Hamming distance $d(\xx, \X) =
\sum_{i}{|\X_i-x_i|}$, and $\MP(\xx|\X) = p_e^{d(\xx, \X)}(1-p_e)^{m-d(\xx, \X)}$, where $p_e$ is the
error probability of the binary symmetric channel. To determine $\yy$, we
can maximize the posterior probability of $\yy$ given $\X$ derived as follows,
$$
\begin{array}{lll}
\MP\{\yy|\X\} & = & \frac{\MP\{\X|\yy\}\MP\{\yy\}}{\MP\{\X\}} \\
         & = & \frac{\MP\{\X|\yy\}\MP\{\yy\}}{\MP\{\X\}} \\
         & \stackrel{(a)}{=} & \frac{\MP\{\X,\xx|\yy\}\MP\{\yy\}}{\MP\{\X\}} \\
         & \stackrel{(b)}{=} & \frac{\MP\{\X|\xx\}\MP\{\yy\}}{\MP\{\X\}} \\
         & = & \frac{\prod_{i=1}^{m}{\MP\{\X_i|x_i\}}\prod_{j=1}^{n}{\MP\{y_i\}}}{\MP\{\X\}} \\
         & = & \frac{\prod_{i=1}^{m}{ p_e^{|x_i- \X|}(1-p_e)^{1-|x_i-\X|}}\prod_{j=1}^{n}{p_i^{y_i}(1-p_i)^{1-y_i}}}{\MP\{\X\}}
\end{array}
$$
where $\xx = \G \otimes \yy$.  $(a)$ and $(b)$ are due to the deterministic
relationship between $\xx$ and $\yy$.  
{Recall that $x_i = \bigvee_{j=1}^{n}{(g_{ij}\wedge y_j)}, i = 1, \ldots, m$.
With $M$ is a ``large enough'' constant, we can use big-$M$ formulation}~\cite{Griva2008Linear}
{to relax the disjunctive set and convert the above relationship between $\xx$ and
$\yy$ into the following two sets of conditions:}
\begin{equation}
\begin{array}{llll}
 \displaystyle x_i & \le & \sum_{j=1}^{n}{g_{ij}y_{j}}, \mbox{ } \forall i=1,\ldots,m. \\
 \displaystyle M\cdot x_i & \ge & \sum_{j=1}^{n}{g_{ij}y_{j}}, \mbox{ } \forall i=1,\ldots,m. \\
\end{array}
\end{equation}
Here, since $\sum_{j=1}^{n}{g_{ij}y_{j}} \le n$, we can set $M = n$.
Finally, taking $\log$ on both sides  and introducing additional auxiliary variable
$\alpha_i = |\X_i - x_i|$, we {have} the the following integer programming
problem:
\begin{equation}
\begin{array}{lll}
\underset{\alpha,y}{\max.}& \displaystyle \sum_{i=1}^m{\left[\alpha_i\log{p_e} + (1-\alpha_i)\log(1-p_e)\right]}\\
    & + \sum_{j=1}^n{\left[(1-y_j)\log{(1-p_j)} + y_j\log{p_j}\right]} \\
\mbox{s.t.}& \displaystyle x_i \le \sum_{j=1}^{n}{g_{ij}y_{j}}, \hspace{10mm} \forall i=1,\ldots,m, \\
    & \displaystyle n\cdot x_i \ge \sum_{j=1}^{n}{g_{ij}y_{j}}, \hspace{5.5mm} \forall i=1,\ldots,m, \\
    & \alpha_i \ge \X_i - x_i, \hspace{11mm} \forall i=1,\ldots,m, \\
    & \alpha_i \ge x_i - \X_i, \hspace{11mm} \forall i=1,\ldots,m, \\
    & \alpha_i, x_i, y_j = \set{0,1}, \hspace{4.5mm} \forall i=1,\ldots,m, j=1,\ldots, n. \\
\end{array}
\label{eq:IPP}
\end{equation}
This optimization function can be solved using ILP solvers. Note that $p_e$ can
be thought of the penalty for mismatches between $x_i$ and $\X_i$.
\normalsize
\paragraph*{Zero Error Case}
If $\X$ is perfectly observed, containing no noise,
we have $p_e = 0$ and $\alpha_i = \xx_i - \X_i = 0$, or equivalently,
$\xx_i = \X_i$. The integer programming problem in \eqref{eq:IPP}
can now be simplified as:
\begin{equation}
\begin{array}{lll}
\underset{y}{\max.}& \displaystyle \sum_{j=1}^n{\left[(1-y_j)\log{(1-p_j)} + y_j\log{p_j}\right]} \\
\mbox{s.t.}& \displaystyle \X_i \le \sum_{j=1}^{n}{g_{ij}y_{j}}, \forall i=1,\ldots,m, \\
    & \displaystyle n\cdot \X_i \ge \sum_{j=1}^{n}{g_{ij}y_{j}}, \forall i=1,\ldots,m, \\
    & y_j = \set{0,1}, \forall j=1,\ldots, n. \\
\end{array}
\label{eq:IPPSim}
\end{equation}

Clearly, the computation complexity of the zero error case is lower
compared to \eqref{eq:IPP}. It can also be used in the case where
prior knowledge regarding the noise level is not available.

\section{Evaluation}
\label{sect:eval}

In this section, we first introduce the performance metrics, and then present
evaluation results on a synthetic data set varying the number of PUs.

\subsection{Performance metrics}
We denote by $\hat{p}$ and $\hat{\G}$ the inferred active probability of
PUs and the inferred mixing matrix, respectively.
\subsubsection{Structure Error Ratio}
This metric indicates how accurate the mixing matrix is estimated. It is defined by the Hamming
distance between $\G$ and $\hat{\G}$ divided by its size.
\beq
\begin{array}{lll}
\bar{H}_g & \stackrel{\Delta}{=} & \frac{1}{mn} \sum_{i=1}^{n} d^H(g_{i},\hat{g}_{i}).
\end{array}
\eeq
To estimate $\bar{H}_g$ however, two challenges remain:
First, the number of inferred independent
components may not be identical as the ground truth. Second, the order of
independent components in $\G$ and $\hat{\G}$ may be different.

To solve the first problem, we can either prune $\hat{\G}$ or introduce columns into $\G$
to equalize the number of components ($n = \hat{n}$, where $\hat{n}$ is the number of columns
in $\hat{\G}$). For the second problem, we propose a matching algorithm that minimizes the
Hamming distance between $\G$ and $\hat{\G}$ by permuting the {corresponding} columns in $\hat{\G}$.

\paragraph*{Structure Matching Problem}
A naive matching algorithm needs to consider all $\hat{n}!$
column permutations of $\hat{\G}$, and chooses the one that has the minimal Hamming
distance to $\G$. This approach incurs an exponential computation complexity. Next,
we first formulate the best match as an ILP problem. Denote the
Hamming distance between column $\hat{g}_{i}$ and $g_{j}$ as
$c_{ij} \geq 0$. Define a permutation matrix $\A_{n \times n}$ with $a_{ij} = 1$
indicating that the $i$th column in $\hat{\G}$ is matched with the $j$th column in $\G$.
The problem now is to find a permutation matrix such that the total
Hamming distance between $\G$ and $\hat{\G}$ (denoted by $d^H(\G,\hat{\G})$)
is minimized. We can formulate this problem as an ILP as follows:
\begin{equation}
\begin{array}{ll}
\underset{a}{\min.}& \displaystyle \sum_{i=1}^n{\sum_{j=1}^n{c_{ij} a_{ij}}} \\
\mbox{s.t.}& \displaystyle \sum_{i=1}^n{a_{ij}} = 1, \\
    & \displaystyle \sum_{j=1}^n{a_{ij}} = 1, \\
    & a_{ij} = 0,1 \mbox{ } \forall i,j=1, \ldots ,n.
\end{array}
\label{eq:pgmatching}
\end{equation}
\begin{algorithm}[tp]
\caption{Bipartite graph matching algorithm} \small
\label{algo:pg} \SetKwData{Left}{left} \SetKwInOut{Input}{input}
\SetKwInOut{Output}{output} \SetKwInOut{Init}{init}
\SetKwFor{For}{for}{do}{endfor} \SetKwFunction{Hungarian}{BipartiteMatching}
\SetKwFunction{MatchPG}{MatchPG}
\MatchPG($\G$, $\hat{\G}$, $\hat{p}$)\\
\Input{$\G_{m \times n}$, $\hat{\G}_{m \times \hat{n}}$, $\hat{p}_{1 \times \hat{n}}$; ($n \le \hat{n} \le 2^{m}$)}
\Init{$\hat{\G}'_{m \times \hat{n}} = 0$; $\hat{p}'_{1 \times \hat{n}} = 0$; $\C_{\hat{n} \times \hat{n}} = 0$\;}
\BlankLine
\nl\For{$i = 1, \ldots, \hat{n}$}{\label{Line1}
\nl    \For{$j = 1, \ldots, \hat{n}$}{\label{Line2}
\nl        \eIf{$g_{i} = 0$} {\label{Line3}
\nl            $c_{ij} = d^H(g_{i}, \hat{g}_{j}) \times m$\;\label{Line4}
        } {
\nl            $c_{ij} = d^H(g_{i}, \hat{g}_{j})$\;\label{Line5}
        }
    }
}
\nl$\A$ = \Hungarian($\C$)\;\label{Line6}
\nl\For{$i = 1, \ldots, \hat{n}$}{\label{Line7}
\nl    find $j$ such that $a_{ij} = 1$\;\label{Line8}
\nl    $\hat{g}'_{i} = \hat{g}_{j}$\;\label{Line9}
\nl    $\hat{p}'_i = \hat{p}_j$;\label{Line10}
}
\nl Prune $\hat{\G}'$: $\hat{\G}' = \hat{g}'_{1 \ldots n}$\;\label{Line11}
\nl Prune $\hat{p}'$: $\hat{p}' = \hat{p}'_{1 \ldots n}$\;\label{Line12}
\nl\Output{$\hat{\G}'$ and $\hat{p}'$}\label{Line13}
\end{algorithm}
\normalsize

The constraints ensure the resulting $\A$ is a permutation matrix. This problem
 can be solved using ILP solvers. However, we observe that the ILP is equivalent to
a maximum-weight bipartite matching problem. In the bipartite graph, the vertices
are positions of the columns, and the edge weights are the Hamming distance of the
respective columns. If we consider $d^H(g_i,\hat{g}_i)$, the Hamming distance between column $g_i$
and $\hat{g}_i$, to be the ``cost'' of matching $\hat{g}_i$
to $g_i$, then the maximum-weight bipartite matching problem can be solved
in $O(n^3)$ running time~\cite{KuhnHungarian}, where $n$ is
the number of vertices. The algorithm requires $\G$ and $\hat{\G}$ to have the same
number of columns.

One {greedy} solution is to prune $\hat{\G}$ by selecting the top $n$ components from
$\hat{\G}$, which have the highest associated probabilities $\hat{p}_{i}$ since
they are the most likely true components. However, when $T$ is small and/or under
large noise, we may not have sufficient observations to correctly identify
components in $\G$ with high confidence. As a result, true components might
have lower active probabilities comparing to the noise components. To address
the problem, we instead keep a larger $\hat{n}$ and introduce $\hat{n} - n$
artificial components into $\G$. These components will be represented by zero
columns in $\G$.
While matching the inferred columns in $\hat{\G}$ to the columns in $\G$, clearly an
undesirable scenario occurs when we accidently match a column in $\hat{\G}$ to an
additive zero column in $\G$. This happens when an inferred column $\hat{g}_{i}$ is sparse
(i.e. having a very small Hamming distance to the zero column). To avoid the
incident, we multiply the cost of matching any column in $\hat{\G}$ to
a zero column in $\G$ by $m$. This eliminates the case {in which} a column $\hat{g_i}$
is matched with a zero column in $\G$, since it is more expensive than matching
with another non-zero column $g_i$.
We can now select the best $n$ candidates in $\hat{\G}$, which yields
a reduced mixing matrix $\hat{\G}'$ of size $m \times n$, and elements in active
probability vector $\hat{p}'$ will also be selected accordingly. The solution
to the structure matching problem is detailed in Algorithm \ref{algo:pg}.

\begin{figure}[tp]
\begin{center}
\hspace{-0.4in}\includegraphics[width=3.2in]{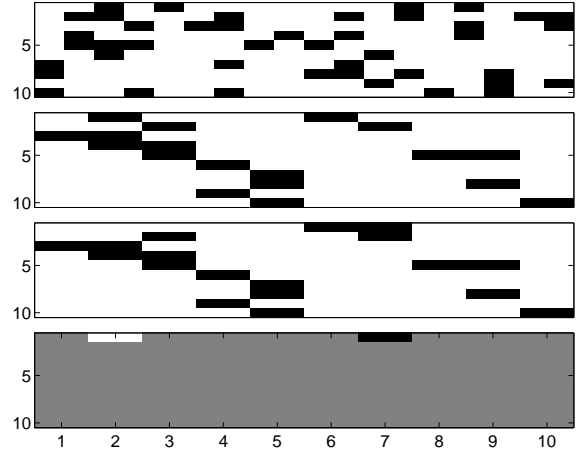}
\end{center}
\caption{From top to bottom: inferred matrix $\hat{\G}$ with 18
inferred components, transformed matrix $\hat{\G}'$ with only 10
components remaining (8 noisy ones were removed), original $\G$
and difference matrix between $\G$ and $\hat{\G}'$. Black dot = 1
and white dot = 0.} \label{fig:illus}
\end{figure}

In the algorithm,  lines~\ref{Line1} -- \ref{Line5} build the input weight matrix
$\C_{\hat{n} \times \hat{n}}$ for the bipartite matching algorithm. If $g_i$ is
a zero column, $c_{ij}$ will be scaled by $m$ to avoid the matching
between column $g_i$ and $\hat{g}_j$ (line~\ref{Line4}). The bipartite matching algorithm
finds the optimal permutation matrix $\A$
to transform $\hat{\G}$ into $\hat{\G}'$ that is ``closest'' to $\G$ (lines
\ref{Line6} -- \ref{Line10}). We are only interested in the first $n$ columns
of $\hat{\G}'$ and $\hat{p}'$ (as they most likely represent the true PUs).
Therefore, $\hat{\G}'$ and $\hat{p}'$ are pruned in lines~\ref{Line11} -- \ref{Line13}.

%
%
%
%

As an example, the inferred result of a random network with $n = m = 10$ is given in Figure~\ref{fig:illus}.
Non-zero entries and zero entries of $\G$,
$\hat{\G}$, and $\hat{\G}'$ are shown as black and white dots,
respectively. The entry-wise difference matrix $|\G-\hat{\G}'|$ is
given in the bottom graph. {Gray dots in the difference matrix
indicate identical entries in the inferred $\hat{\G}$ and the original
$\G$}; and black dots indicate different entries (and thus errors in the inferred matrix).
In this case, only the first row
(corresponding to the first monitor $x_1$) contains some errors.
\subsubsection{PU Active Probability Error Ratio}
Prediction error in the inferred active probabilities of PUs is
measured by the root mean square error ratio between $\hat{p}'$ and $p$, defined as {follows}:
\beq
\begin{array}{lll}
\bar{P} & \stackrel{\Delta}{=} & \sqrt{\frac{\sum_{i=1}^{n}
(\hat{p}'_i - p_i)^2}{n}} / \frac{\sum_{i=1}^{n} p_i}{n}.
\end{array}
\eeq
PU active probability error ratio can be interpreted as the fraction of the inferred active probability
that deviates from the true values.

\subsubsection{Miscount in the number of PUs}
Accuracy of the inference algorithm can also be assessed by evaluating the
difference between the number of inferred PUs $\hat{n}$ and the real number of
PUs $n$ in the system.  Clearly, with a smaller threshold value $\varepsilon$ in
Algorithm~\ref{algo:inc}, the number of inferred PUs $\hat{n}$ may increase. In
the subsequent experiments, we fix $\varepsilon = 0.01$ and evaluate the changes
in $\hat{n}$ as the real number of PUs increases from 5 to 20.

\subsubsection{PU Activity Error Ratio}
After applying {\sc FindBICA} in Algorithm~\ref{algo:inc} on the measurement data of length $T$ to obtain
$\hat{\G}$ and $\hat{p}$, realizations of the hidden causes (i.e. PUs)
can be computed by solving the {maximum likelihood estimation} problem in (\ref{eq:IPP}). We define
\beq
\begin{array}{lll}
\bar{H}_y & \stackrel{\Delta}{=} & \frac{1}{nT} \sum_{i=1}^{T} d^H(y_{i},\hat{y}_{i}),
\end{array}
\eeq
where $y_{i}$ is the $i$'th column of $\Y$.
Similar to $\bar{H}_g$, this metric measures how accurately the PUs' activity
matrix is inferred by calculating the ratio between the size of $\yy$ and the absolute
difference between $\yy$ and $\hat{\yy}$.

\subsection{Experimental results}

For evaluation, we consider the same simulation scenario as in
Section~\ref{sec:linearvsbinary}, where 10 monitors are randomly deployed to
monitor $n$ PUs {using} a single channel.
Algorithms are implemented in Matlab, and all experiments are conducted on
a workstation with an Intel Core 2 Duo T5750\makeatletter@\makeatother2.00GHz
processor and 2GB RAM. Noise is introduced by randomly flip a bit in the
observation matrix $\X$ from 1 to 0 (and vice versa) at probability $e$.  $e$
is set at 0\%, 2\%, and 5\% in our simulations. All presented results are
averages of 50 runs with different initializations.  In the experiments, we
vary the number of PUs $n$ and the number of observations $T$.
\vspace{2mm}

\subsubsection{Varying the number of PUs}

\begin{figure*}[tp]
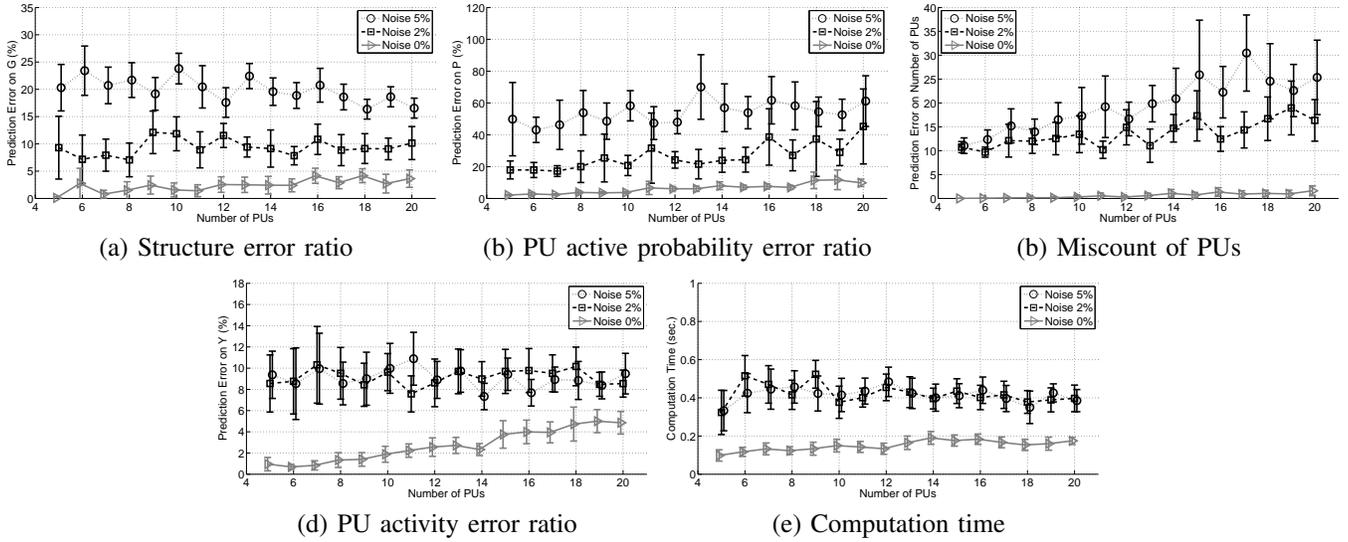

\begin{center}
\begin{tabular}{ccc}
\hspace{-0.15in}\includegraphics[width=2.7in]{g_error.eps}
\hspace{-0.3in} & \hspace{-0.3in} \includegraphics[width=2.7in]{p_error.eps}
\hspace{-0.3in} & \hspace{-0.3in} \includegraphics[width=2.7in]{pu_error.eps}\\
\hspace{-0.15in}(a) Structure error ratio
\hspace{-0.3in} & \hspace{-0.3in} (b) PU active probability error ratio
\hspace{-0.3in} & \hspace{-0.3in} (b) Miscount of PUs \\
\end{tabular}
\begin{tabular}{cc}
\includegraphics[width=2.7in]{y_error.eps}
\hspace{-0.3in} & \hspace{-0.3in} \includegraphics[width=2.7in]{computation_time.eps}\\
(d) PU activity error ratio
\hspace{-0.3in} & \hspace{-0.3in} (e) Computation time\\
\end{tabular}
\end{center}
\caption{
Effects of noise level and the number of PUs on inference results. Noise is
introduced at three levels: 0\%, 2\% and 5\%. Results are averages of 50 runs
with different initial seed. Symmetric error bars indicate standard
deviations.}
\label{fig:error}
\end{figure*}

{In the first set of experiments, we fix the sample size $T = 10,000$ and
vary the number of PUs from 5 to 20 to study its impact on the accuracy of our
method.  Experiment results over
50 runs for each PU setting are shown in Figure}~\ref{fig:error}.  In absence of
noise, we observe that the inferred mixing matrix $\hat{\G}$ is mostly correct
even for a large number of PUs (Figure~ \ref{fig:error}(a)). As the number of
PUs increases, errors in the inferred active probabilities and the inferred
number of PUs tend to increase though within 10\% and 0.65 as shown
in Figure~\ref{fig:error}(b) and (c). Recall that the PUs are ordered based on
their respective columns in $\hat{\G}$. Therefore, errors in $\hat{\G}$ may
have a cascading effect on PU active probability error ratio since we may
compare the wrong pair of PUs as a result.  Performance of the algorithm tends
to degrade with more noises. In particular, as shown in
Figure~\ref{fig:error}(c), more components are inferred compared to the ground
truth. This is because when the noise probability $p_e$ is greater than the
threshold value, some noise components are erroneously introduced.

The errors in determining the set of active PUs are shown in Figure~\ref{fig:error}(d).
We can see again that the proposed algorithm achieves remarkable accuracy at zero-noise
level. Prediction error on $\Y$ is only 1\% for 5 PUs, and gradually increases
up to 5\% for 20 PUs. At 2\% and 5\% noise levels, performance degrades as the
prediction error goes up to about 10\%. Noise has two effects on the solution to
the inverse problem. First, the inferred mixing matrix and active probability
can be erroneous. Second, no maximum likelihood estimator guarantees to give
the exact result when the problem is under or close to being under-determined with
noisy measurements. In fact, we have experimented with the case {in which}
$\G$ and $p$ are both known, the results are similar. This implies that the main
source of errors in solving the inverse problem comes from the problem itself being
under-determined or close to being under-determined.

Finally, as shown in Figure~\ref{fig:error}(e), the computation time of proposed algorithms is
negligible, mostly under 0.2 second without noise. With noise,
computation time increases but is similar at the two noise levels (differing by less
than 0.5 second). 
The presence of noise may introduce noise components and render the
estimation of correlation (in Equation (\ref{eq:cov})) inaccurate. Thus, higher processing time entails.

\vspace{1mm}

\begin{figure*}[tp]
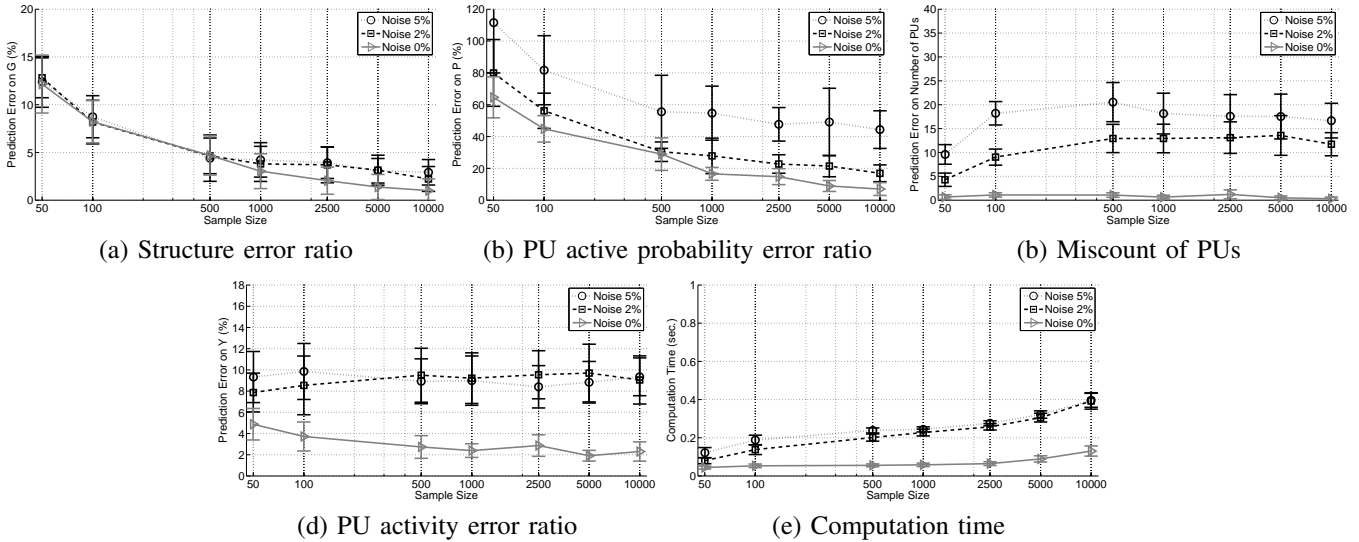

\begin{center}
\begin{tabular}{ccc}
\hspace{-0.15in}\includegraphics[width=2.7in]{g_error_2.eps}
\hspace{-0.3in} & \hspace{-0.3in} \includegraphics[width=2.7in]{p_error_2.eps}
\hspace{-0.3in} & \hspace{-0.3in} \includegraphics[width=2.7in]{pu_error_2.eps}\\
\hspace{-0.15in}(a) Structure error ratio
\hspace{-0.3in} & \hspace{-0.3in} (b) PU active probability error ratio
\hspace{-0.3in} & \hspace{-0.3in} (b) Miscount of PUs \\
\end{tabular}
\begin{tabular}{cc}
\includegraphics[width=2.7in]{y_error_2.eps}
\hspace{-0.3in} & \hspace{-0.3in} \includegraphics[width=2.7in]{computation_time_2.eps}\\
(d) PU activity error ratio
\hspace{-0.3in} & \hspace{-0.3in} (e) Computation time\\
\end{tabular}
\end{center}
\caption{Effects of noise level and the size of observations on inference results.
The $x$-axis is in logarithmic scale.}
\label{fig:error2}
\end{figure*}


\subsubsection{Varying the size of observations $T$}
In the second set of experiments, we fix the number of PUs $n = 10$ and
study the impact of the observation size $T$. A small $T$ (and thus insufficient
observations) would lead to higher uncertainty while a large $T$ incurs higher
computation overhead. Furthermore, if $T$ is too small, some PUs may never be
active in the trace, making them impossible to be inferred. It is therefore
interesting to investigate the effect of $T$ on the accuracy and computation
overhead of the proposed algorithm.

Experiment results over 50 runs for each observation size are shown in
Figure~\ref{fig:error2}.  As expected, the structure error ratio and the PU
active probability error ratio reduce significantly as $T$ increases from 50 to
1,000 (Figure~\ref{fig:error2}(a) and (b)).  If we further increase $T$ from
1,000 to 10,000, the performance gain is somewhat marginal. However, the
computation time grows considerably since it takes longer to process the
observations (Figure~\ref{fig:error2}(e)).  From Figure~\ref{fig:error2}(c) and
(d), we also see that the miscount of PUs  and the PU activity error ratio are
not sensitive to the sample size $T$, but are more affected by the noise level.

\section{Related work}
\label{sec:related}
Independent component analysis (ICA) has been studied in the past as
a computational method for separating a multivariate signal into additive
subcomponents supposing the mutual statistical independence of the non-Gaussian
source signals. Most ICA methods
assume linear mixing of continuous signals~\cite{Hyvarinen00}. A special
variant of ICA, called Boolean Independent Component Analysis (BICA), considers boolean mixing (e.g., OR,
XOR etc.) of binary signals. Existing solutions to BICA mainly differ in their
assumptions of prior distribution of the mixing matrix, noise model, and/or
hidden causes. In \cite{Yeredor07}, Yeredor considers BICA in XOR mixtures and
investigates the identifiability problem. A deflation algorithm is proposed for
source separation based on entropy minimization. In \cite{Yeredor07} the number of
independent random sources $K$ is assumed to be known. Furthermore, the mixing
matrix is an $K$-by-$K$ invertible matrix.  In
\cite{computer06anon-parametric}, infinite number of hidden causes following
the same Bernoulli distribution are assumed.  Reversible jump Markov chain
Monte Carlo and  Gibbs sampler techniques are applied. In contrast, in our
model, the hidden causes may follow different distribution and the mixing
matrix tends to be sparse. Streich {\it et al.}~\cite{Streich09} study the problem of multi-assignment
clustering for boolean data, where the observations either from a signal
following OR mixtures or from a noise component. The key assumption made in this
work is that the elements of matrix $\X$ are conditionally independent given
the model parameters. This greatly reduces the computational complexity and
makes the scheme amenable to gradient descent optimization solution. This
assumption is in general invalid. In \cite{Kabán_factorisationand}, the problem of factorization and
de-noise of binary data due to independent continuous sources is considered,
which follow beta distribution. Finally, \cite{computer06anon-parametric}
consider under-presented case of less sensors than sources with continuous
noise, while \cite{Kabán_factorisationand} and~\cite{Streich09} deal with over-determined case, where the number of
sensors is much larger. In this work, we consider primarily the under-presented cases
encountered in data networks.

There exists a large body of work on blind deconvolution with binary sources in
the context of wireless communication~\cite{Diamantaras06,LiCichocki03}. In
time-invariant linear channels, the output signal $x(k)$ is a convolution of
the channel realizations $a(k)$ and the input signal $s(k)$, $k=1,2,\ldots, K$
as follows:
\beq
x(k) =
\sum_{l=0}^{L}{a(l)s(k-l)}, k=1,\ldots, K.
\label{eq:deconvolution}
\eeq
The objective is to recover the input signal $s$.  Both stochastic and
deterministic approaches have been devised for blind deconvolution.  As evident
from \eqref{eq:deconvolution}, the output signals are linear mixtures of the
input sources in time, and additionally the mixture model follows a specific
structure.

Literature on boolean/binary factor analysis (BFA) is also related to our work.
The goal of BFA is to decompose a binary matrix $\X_{m\times T}$ into $\A_{m
\times n} \otimes \B_{n\times T}$ with $\otimes$ being the \textit{OR} mixture
relationship as defined in (\ref{eq:boolean}). We use the same notation of $m$, $n$,
and $T$  to illustrate the relationship between BFA and \bica.  $\X$ in BFA is
often called an attribute-object matrix providing $m$-dimension attributes of
$T$ objects.  $\A$ and $\B$ are  the attribute-factor and factor-object
matrices. All the elements in $\X$, $\A$, and $\B$ are either 0 or 1. $n$ is
defined to be the number of underlying factors and is assumed to be
considerably smaller than the number of objects $T$. BFA methods aim to find a
feasible decomposition minimizing $n$.  Frolov {\it et al.} study the problem
of factoring a binary matrix in a series of papers~\cite{Frolov05, Frolov07,
Frolov07_2} using Hopfield neural networks. This approaches are based on
heuristics and do not provide much  theoretical insight regarding the
properties of the resulting decomposition.  More recently, Belohlavek {\it et
al.} propose a matrix decomposition method utilizing formal concept
analysis~\cite{Belohlavek2010}.  The paper claims that optimal decomposition
with the minimum number of factors are those where factors are formal concepts.
It is important to note that even though BFA assumes a similar disjunctive
mixture model to our problem, the objective is different.  While BFA tries to
find a matrix factorization so that the number of factors are minimized, \bica~tries
to identify independent components. One can easily come up an example,
where the number of independent components (factors) is larger than the number
of attributes, while BFA always finds factors no larger than the number of
attributes.
\section{Conclusions}
\label{sect:conclusion}
In this paper,  we introduced the PU separation problem for cognitive radio
networks and argue its relevance in collaborative spectrum sensing and monitor
resource allocation. We demonstrated that  a binary mixing model is sufficient
to characterize the behavior of energy detectors in presence of multiple PUs,
and devised a binary inference framework to resolve the PU separation problem.
The results are somewhat surprising that PUs can be accurately separated and
identified with only binary observations from the set of monitors to which they
are observable. Simulation validation shows that the PU-SU relationship as well
as the PUs' statistics and activities can be estimated with high accuracy when
the noise is marginal.
\section*{Acknowledge}
This work is funded in part by the National Science Foundation under grants
CNS-0953377, CNS-0905556, CNS-091046, CNS-0832084, and ECCS-1028782.
\bibliographystyle{IEEEtran}

\begin{thebibliography}{24}

\bibitem{CR00}
J.~Mitola and G.~Q. Maguire, ``Cognitive radio: Making software radios more
  personal,'' \emph{IEEE Pers. Commun.}, vol.~6, pp. 13--18, Aug. 1999.

\bibitem{CR01}
S.~Haykin, ``Cognitive radio: brain-empowered wireless communications,''
  \emph{Selected Areas in Communications, IEEE Journal on}, vol.~23, no.~2, pp.
  201 -- 220, 2005.

\bibitem{CR02}
D.~Niyato, E.~Hossein, and Z.~Han, \emph{Dynamic spectrum access in cognitive
  radio networks}.\hskip 1em plus 0.5em minus 0.4em\relax Cambridge, UK:
  Cambridge University Press, 2009.

\bibitem{Ghasemi_Sousa}
A.~Ghasemi and E.~S. Sousa, ``Collaborative spectrum sensing for opportunistic
  access in fading environments,'' in \emph{Proc. of IEEE Symp. New Frontiers
  in Dynamic Spectrum Access Networks}, Baltimore, USA, Nov 2005.

\bibitem{Sun_Zhang}
C.~Sun, W.~Zhang, and K.~B. Letaief, ``Cooperative spectrum sensing for
  cognitive radios under bandwidth constraint,'' in \emph{Proc. of IEEE
  Wireless Communications and Networking Conf.}, Hong Kong, China, Mar 2007.

\bibitem{CPC1}
P.~Houz, S.~B. Jemaa, and P.~Cordier, ``Common pilot channel for network
  selection,'' in \emph{Proc. of IEEE Vehicular Technology - Spring},
  Melbourne, Australia, May 2006.

\bibitem{CPC2}
M.~Filo, A.~Hossain, A.~R. Biswas, and R.~Piesiewicz, ``Cognitive pilot
  channel: Enabler for radio systems coexistence,'' in \emph{Proc. of Second
  Int. Workshop on Cognitive Radio and Advanced Spectrum Management}, alborg,
  Denmark, May 2009.

\bibitem{CPC3}
O.~Sallent, J.~Perez-Romero, R.~Agusti, and P.~Cordier, ``Cognitive pilot
  channel enabling spectrum awareness,'' in \emph{Proc. Int. Conf. on
  Communications, Workshop on}, Dresden, Germany, Jun. 2009.

\bibitem{CPC4}
J.~Perez-Romero, O.~Salient, R.~Agusti, and L.~Giupponi, ``A novel on-demand
  cognitive pilot channel enabling dynamic spectrum allocation,'' in
  \emph{Proc. of IEEE Int. Symp. on New Frontiers in Dynamic Spectrum Access
  Networks (DySPAN)}, Dublin, Ireland, Apr. 2007.

\bibitem{Wang10}
W.~Lina, ``Novel scheme for cooperation spectrum sensing in cognitive radio
  networks,'' in \emph{Computer and Automation Engineering (ICCAE), 2010 The
  2nd International Conference on}, vol.~4, 2010, pp. 371 --375.

\bibitem{Hyvarinen00}
A.~Hyv\"{a}rinen and E.~Oja, ``Independent component analysis: algorithms and
  applications,'' \emph{Neural Netw.}, vol.~13, no. 4-5, pp. 411--430, 2000.

\bibitem{Griva2008Linear}
I.~Griva, S.~G. Nash, and A.~Sofer, \emph{{Linear and Nonlinear Optimization,
  Second Edition}}, 2nd~ed.\hskip 1em plus 0.5em minus 0.4em\relax Society for
  Industrial Mathematics, December 2008.

\bibitem{KuhnHungarian}
H.~W. Kuhn, ``The {H}ungarian method for the assignment problem,'' \emph{Naval
  Research Logistic Quarterly}, vol.~2, pp. 83--97, 1955.

\bibitem{Yeredor07}
A.~Yeredor, ``Ica in boolean xor mixtures,'' in \emph{Proceedings of the 7th
  international conference on Independent component analysis and signal
  separation}.\hskip 1em plus 0.5em minus 0.4em\relax Berlin, Heidelberg:
  Springer-Verlag, 2007, pp. 827--835.

\bibitem{computer06anon-parametric}
F.~W. Computer and F.~Wood, ``A non-parametric bayesian method for inferring
  hidden causes,'' in \emph{Proceedings of the Twenty-Second Conference on
  Uncertainty in Artificial Intelligence (UAI)}.\hskip 1em plus 0.5em minus
  0.4em\relax AUAI Press, 2006, pp. 536--543.

\bibitem{Streich09}
A.~P. Streich, M.~Frank, D.~Basin, and J.~M. Buhmann, ``Multi-assignment
  clustering for boolean data,'' in \emph{Proceedings of the 26th Annual
  International Conference on Machine Learning}, ser. ICML '09.\hskip 1em plus
  0.5em minus 0.4em\relax New York, NY, USA: ACM, 2009, pp. 969--976.

\bibitem{Kabán_factorisationand}
A.~Kabán and E.~Bingham, ``Factorisation and denoising of 0-1 data: A
  variational approach,'' \emph{Neurocomputing}, vol.~71, no. 10-12, pp. 2291
  -- 2308, 2008, neurocomputing for Vision Research; Advances in Blind Signal
  Processing.

\bibitem{Diamantaras06}
K.~Diamantaras and T.~Papadimitriou, ``Blind deconvolution of multi-input
  single-output systems with binary sources,'' \emph{Signal Processing, IEEE
  Transactions on}, vol.~54, no.~10, pp. 3720 --3731, October 2006.

\bibitem{LiCichocki03}
Y.~Li, A.~Cichocki, and L.~Zhang, ``Blind separation and extraction of binary
  sources,'' \emph{IEICE Trans. Fund. Electron. Commun. Comput. Sci.}, vol.
  E86-A, no.~3, p. 580¿589, 2003.

\bibitem{Frolov05}
D.~H{\'u}sek, H.~Rezankov{\'a}, V.~Sn{\'a}sel, A.~A. Frolov, and P.~Polyakov,
  ``Neural network nonlinear factor analysis of high dimensional binary
  signals,'' in \emph{SITIS}, 2005, pp. 86--89.

\bibitem{Frolov07}
A.~A. Frolov, D.~H{\'u}sek, I.~P. Muraviev, and P.~Y. Polyakov, ``Boolean
  factor analysis by attractor neural network,'' \emph{IEEE Transactions on
  Neural Networks}, vol.~18, no.~3, pp. 698--707, 2007.

\bibitem{Frolov07_2}
D.~H{\'u}sek, P.~Moravec, V.~Sn{\'a}sel, A.~A. Frolov, H.~Rezankov{\'a}, and
  P.~Polyakov, ``Comparison of neural network boolean factor analysis method
  with some other dimension reduction methods on bars problem,'' in
  \emph{PReMI}, 2007, pp. 235--243.

\bibitem{Belohlavek2010}
R.~Belohlavek and V.~Vychodil, ``Discovery of optimal factors in binary data
  via a novel method of matrix decomposition,'' \emph{J. Comput. Syst. Sci.},
  vol.~76, pp. 3--20, February 2010.

\end{thebibliography}

\end{document}